\newtheorem{theorem}{Theorem}
\newtheorem{lemma}{Lemma}
\newtheorem{defn}{Definition}
\begin{document}
\title{Assigning Optimal Integer Harmonic Periods to Hard Real-Time Tasks}
\author{\IEEEauthorblockN{Anand Bhat and Ragunathan (Raj) Rajkumar}
	\IEEEauthorblockA{Electrical and Computer Engineering,\\
	Carnegie Mellon University,\\
	Pittsburgh, PA 15213.\\
	Email: anandbha, rajkumar@andrew.cmu.edu}
}
\maketitle
\begin{abstract}
Selecting period values for tasks is a very important step in the design process of a real-time system, especially due to the significance of its impact on system schedulability. It is well known that, under RMS, the utilization bound for a \emph{harmonic} task set is 100\%. Also, polynomial-time algorithms have been developed for response-time analysis of harmonic task sets. In practice, the largest acceptable value for the period of a task is determined by the performance and safety requirements of the application. In this paper, we address the problem of assigning harmonic periods to a task set such that every task gets assigned an integer period less than or equal to its application specified upper bound and the task utilization of every task is less than 1. We focus on integer solutions given the discrete nature of time in real-time computer systems.

We first express this problem of assigning harmonic periods to a task set as a \emph{discrete piecewise optimization} problem. We then present the \emph{``Discrete Piecewise Harmonic Search''} (DPHS) algorithm that outputs an optimal harmonic task assignment. We then define conditions for a metric to be \emph{rational} for harmonization. We show that commonly used metrics like, the total percentage error (TPE), total system utilization (TSU), first order error (FOE), and maximum percentage error (MPE), are rational. We next prove that the DPHS algorithm finds the optimal feasible assignment, if one exists, for these rational metrics. We apply the DPHS algorithm to harmonize task sets used in real-world applications to highlight its benefits. We compare the performance of the DPHS algorithm against a brute-force search and find that the DPHS searches up to 94\% fewer task sets than the brute-force search that obtains the optimal solution.
\end{abstract}

\section{Introduction}\label{Intro}

Real-time systems closely interact with the environments in which they are deployed \cite{CPS}. Owing to the recurring nature of events in such environments, a periodic task model is widely used in these systems. For example, in the autonomous driving context, a car needs to sense the environment, perform calculations and control actuators on a periodic basis. This is accomplished by using tasks that run periodically, using a preemptive real-time scheduling policy like \emph{Rate-Monotonic Scheduling} (RMS)  \cite{Liu}. The selection of these periods plays a very important role in the design, analysis, and schedulability of a real-time system.

The selection of task periods is driven by the safety and performance specifications of a real-time application \cite{Ryu}, \cite{Tools}. This choice naturally has a direct impact on system schedulability. For example, a feedback control application can perhaps produce very accurate control if it runs at a very high frequency, i.e., if the period assigned to the task running the feedback control application is small. However, since smaller periods mean higher CPU utilization, system schedulability is reduced \cite{Seto}. It has been shown that the exact schedulability analysis for the RMS policy is an NP-Complete problem \cite{Eisenbrand}, unless the periods are harmonic \cite{Han}, i.e., every period in the task set is an integer multiple of its shorter periods. Also, polynomial-time solutions exist for the response-time analysis of systems with harmonic task sets \cite{Bonifaci}. 

There are also several considerations related to the periodic nature of tasks. The nature of RM schedulability \cite{Liu} means that a CPU cannot be completely utilized, i.e., reach 100\% utilization, unless the task periods are harmonic. However, in Section \ref{SchedImpact}, we show that, subject to the constraint of having periods less than or equal to the original periods after transformation, a non-schedulable task set cannot be made schedulable by making the periods harmonic. However, having harmonic periods allows for phase optimizations that reduce communication latencies \cite{Gerber} and also enables energy saving optimizations \cite{Sandeep}. Finally, harmonic task sets play an important role in reducing the complexity in the design of distributed time-triggered embedded systems \cite{Kopetz}. In practice, harmonic task sets are widely used in real-time safety-critical systems like automobiles and avionics \cite{Dodd, Busquests, Avionics, Avionics2}.

In this paper, we consider a set of independent periodic tasks with application-specified period values. Considering the benefits of harmonic task sets detailed above, the goal of this paper is to assign a set of integer harmonic periods to the input task set while optimizing a rational metric as defined in Section \ref{SecRationalMetrics}. Some examples of rational metrics are total percentage error (TPE), total system utilization (TSU), first order error (FOE), and maximum percentage error (MPE). We ensure that every period assigned is less than or equal to the original application-specified period and that the task utilization of every task is less than 1 (where applicable). This constraint is intended to satisfy the application safety and performance specification. We consider only integer periods owing to the discrete nature of time in real-time computer systems.

In this paper, we will then present a framework to represent integer harmonic task sets. We show that every integer period from the task set can be described in terms of a multiplier, a base and an exponent. We derive the lower and upper bounds for these parameters for a given task set. The goal is then to find optimal values of the multiplier and the base such that every period assigned is less than or equal to the original application-specified upper bound. Where applicable, we strive to ensure that the utilization of every task in the system is less than 1, i.e. every task is assigned a period greater than its worst-case execution time, ensuring that it can complete execution. We also describe a brute-force approach, which searches through all possible values for the multiplier and the base to find the globally optimal solution. We then represent the harmonization problem as a discrete piecewise optimization problem. For a given multiplier, we identify bases that result in local optima for a given metric. We describe our ``Discrete Piecewise Harmonic Search'' (DPHS) algorithm that identifies these local optima and searches through them to obtain the global optimal. We compare the performance of the DPHS algorithm against the brute-force approach in terms of the number of task sets checked for optimality and running time.

The rest of this paper is organized as follows. We present related work in Section \ref{RelatedWork}. We describe the schedulability impact of harmonization in Section \ref{SchedImpact}. We present the problem statement and describe our system model in Section \ref{systeMBOdel}. We present a mathematical representation of harmonic integer sets in terms of a multiplier, a base, and an exponent in Section \ref{MathRep}. We derive lower and upper bounds on these parameters in Section \ref{Bounds}. We present the brute-force approach to the problem in Section \ref{BruteForce}, followed by the DPHS algorithm in section \ref{Pruned}. In Section \ref{Application}, we discuss the usage scenarios for our approach by applying the DPHS algorithm to real-world task sets. In Section \ref{Evaluation}, we evaluate the performance of the DPHS algorithm against the brute-force approach. We also discuss special properties of the algorithm. In Section \ref{Conclusion}, we summarize our findings.     

\section{Related Work}\label{RelatedWork}
There has been significant work in the field of task period selection in hard real-time systems for optimizing various use cases.  For example, in \cite{Control1}, Henriksson et al. attempt to find an optimal period assignment to distribute computing resources between tasks. In \cite{Control2}, Cervin et al. try to optimize the system performance of a control system. Neither, however, focuses on creating harmonic task sets.

In \cite{Han}, Han et al. have presented the \emph{Sr} and the \emph{DCT} algorithms which attempt to create a harmonic task set to verify the schedulability of a given input non-harmonic task set. Since there exist linear-time solutions to check the schedulability of systems with harmonic task sets, both algorithms attempt to find a harmonic task set such that, every period assigned is less than or equal to the original period and the utilization of every task is less than 1. Schedulability of the original task set is conservatively inferred from the schedulability of the harmonic task set. However, the Sr algorithm assigns artificial periods which are always multiples of 2, whereas the DCT algorithm creates a harmonic series using each term in the original task set. Unlike our work presented in this paper, neither algorithm attempts to find the optimal assignment. Our solution in this paper can be applied readily to solve the above problem. Similarly, in \cite{Min}, harmonic deadline assignments were created using least common multipliers (LCM) of the original task periods, to derive a utilization-based feasibility test for systems with composite deadlines, but an optimal assignment is not attempted to be created.

In \cite{Nasri} and \cite{Fohler}, Nasri et al. address a problem similar to the one addressed in this paper, but accept a range of period values for each task as input and target a feasible real number solution, whereas this paper looks to find an optimal integer solution, where chosen values are below specified thresholds.

\section{Schedulability Impact of Harmonization}\label{SchedImpact}

As highlighted in Section \ref{Intro}, it is tempting to think that harmonization can improve schedulability. However, in this section, we show that a non-schedulable task set cannot be made schedulable by harmonization, constrained to having the periods less than or equal to the original periods for rate-monotonic scheduling policy with implicit deadlines.
\begin{theorem}\label{TheononSched}
	A non-schedulable task set cannot be made schedulable by transforming the periods to values less than or equal to the original periods for rate monotonic scheduling policy with implicit deadlines.
\end{theorem}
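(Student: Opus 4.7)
\medskip
\noindent\textbf{Proof proposal.}
The plan is to reduce the theorem to two classical facts about RM scheduling: (i) under implicit deadlines, rate-monotonic is an optimal static-priority assignment, so if RM is infeasible then no static-priority assignment is feasible; and (ii) the worst-case response time of a task under a \emph{fixed} priority assignment is a monotone function of the higher-priority task periods, in the sense that shrinking any higher-priority task's period can only increase the task's response time.

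Concretely, let the original task set be $\{(C_i, T_i)\}_{i=1}^{n}$, which is non-schedulable under RM with implicit deadlines, and let $\{(C_i, T_i')\}_{i=1}^{n}$ be any transformed task set with $T_i' \leq T_i$ for all $i$. First I would invoke RM optimality: the original task set is infeasible under \emph{every} static-priority assignment, so in particular it is infeasible under the priority assignment $\pi^{\star}$ obtained by applying RM to the \emph{transformed} periods $T_i'$. Hence there exists some task $\tau_k$ whose worst-case response time $R_k^{\text{orig}}$ under $\pi^{\star}$ and original periods satisfies $R_k^{\text{orig}} > T_k$.

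Next I would compare the response time of $\tau_k$ under $\pi^{\star}$ with the transformed periods. Using the standard recurrence $R_k = C_k + \sum_{j \in hp_{\pi^\star}(k)} \lceil R_k/T_j \rceil C_j$, I would argue that, since $T_j' \leq T_j$ implies $\lceil t/T_j' \rceil \geq \lceil t/T_j \rceil$ for every $t \geq 0$, the right-hand side evaluated at any $t$ with the transformed periods is at least the one with the original periods; a standard monotone-fixed-point argument on the iteration then gives $R_k^{\text{trans}} \geq R_k^{\text{orig}}$. Combining this with $T_k' \leq T_k$, I get $R_k^{\text{trans}} \geq R_k^{\text{orig}} > T_k \geq T_k'$, so $\tau_k$ misses its deadline in the transformed task set under $\pi^{\star}$, which is precisely the RM ordering of the transformed periods. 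Thus the transformed task set is also non-schedulable under RM, contradicting the assumption that harmonization could recover schedulability.

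The main obstacle I anticipate is the fact that shrinking periods may \emph{reorder} RM priorities, so a naive ``preserved priority ordering'' argument fails; the crux is therefore invoking RM optimality to evaluate the original task set under the \emph{transformed} priority order $\pi^{\star}$, which decouples the priority-reassignment issue from the period-shrinking issue and lets the monotonicity argument go through cleanly. The rest, namely checking the monotonicity of the response-time recurrence in the $T_j$'s, is routine and can be done in a couple of lines.
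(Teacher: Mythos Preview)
Your argument is correct and shares the paper's core mechanism: monotonicity of the response-time recurrence in the higher-priority periods (each $T_j$ sits in a denominator of the iterate, so shrinking it can only enlarge the response-time estimate, while the deadline $T_m$ itself shrinks to $T'_m \le T_m$). The paper's proof is exactly this one-line observation applied to the unschedulable task $\tau_m$.

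Where you go beyond the paper is in explicitly handling the possibility that shrinking periods \emph{reorders} RM priorities. You invoke RM optimality to evaluate the \emph{original} task set under the \emph{transformed} priority order $\pi^\star$, so that the monotonicity comparison runs under a single fixed assignment and the conclusion lands directly on the RM schedule of the transformed system. The paper's proof keeps the original index ordering $j<m$ in the sum and never discusses what happens if the RM ordering changes after transformation; your use of RM optimality is precisely the step needed to close that gap. So your proof is a more rigorous version of the same idea rather than a genuinely different route.
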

\begin{proof}
	For a task set $\tau = \{\tau_{1}, \tau_{2}, ... \tau_{n}\}$ with tasks arranged with increasing periods i.e $T_{i} \le T_{i+1}$ $\forall$ $i$ from 1 to $n$, the schedulability under rate-monotonic scheduling can be determined using the response time test, which is as follows,
	
	\begin{equation}
	\begin{aligned}
	& a^{i}_{k+1} = C_{i} +  \sum_{j=1}^{i-1} \lceil a^{i}_{k}/T_{j} \rceil C_{j} \\
	& a^{i}_{0} =  \sum_{j=1}^{i} C_{j}
	\end{aligned} 
	\end{equation}
	
	where,  $a^{i}_{k}$ represents an estimate of the response time for task $\tau_{i}$.
	
	This implies that, for a non-schedulable task set at least one task remains unschedulable, i.e. its response time is greater than its deadline. Hence, if $\tau_{m}$ ($m = 1$ to $n$) is non-schedulable we have,
	
	\begin{equation}
	\label{ResponseTimeTest}
	\begin{aligned}
	& a^{m}_{k+1} = C_{m} +  \sum_{j=1}^{m-1} \lceil a^{i}_{k}/T_{j} \rceil C_{j} > T_{m}\\
	\end{aligned} 
	\end{equation}
	
	From the above equation, we see that the response time estimate depends only on the periods of tasks with $i < m$, i.e., the tasks with smaller periods and hence higher priority under RMS. Since $T_{j}$ is part of the denominator transforming periods values from $T_{i}$ to $T'_{i}$ such that $T'_{i} < T_{i}$ can only result in larger response times. 
\end{proof}

Hence, from the above theorem, even by harmonization, a non-schedulable task set cannot be made schedulable by transforming the periods to values less than or equal to the original periods for rate monotonic scheduling policy with implicit deadlines. Conversely, using identical arguments from Theorem \ref{TheononSched}, we can show that the schedulability of a harmonic task set can only improve, if on transformation the period values are allowed to be greater than the original periods.  
\section{System Model and Problem Definition}
\label{systeMBOdel}
In this section, we describe our system model and present our problem statement.

\subsection{System Model}
We consider a hard real-time task set $\tau = \{\tau_{1}, \tau_{2}, \tau_{3},\ldots, \tau_{n}\}$, which consists of $n$ independent periodic tasks. Each task $\tau_{i}$ is described as $(C_{i}, T_{i}, D_{i})$, where $C_{i}$ ($ C_{i} \in \mathbb{R}_{>0}$) represents the worst-case execution time of the task, $T_{i}$ ($ T_{i} \in \mathbb{N}_{>0}$) is the period of the task and $D_{i}$ ($ D_{i} \in \mathbb{N}_{>0}$) is the deadline by which the task is expected to complete execution. We assume implicit deadlines, i.e., $D_{i} = T_{i}$. We also assume that the tasks in the task set are ordered by non-decreasing periods, i.e., $T_{1} \leq T_{2} \leq \ldots \leq T_{n}$.\\

We define a \emph{harmonic set} as a set of numbers where every number is an integer multiple of every smaller number from the set. A task set in which the set of all task periods form a harmonic set is referred to as a \emph{harmonic task set} and the set of task periods is referred to as a \emph{harmonic period set}.

\subsection{Problem Statement}
Given a task set $\tau$, generate an \emph{optimal} harmonic task set $\tau'$ with integer periods (i.e. $T'_{i} \in \mathbb{N}_{>0}$), if one exists, such that,
\begin{enumerate}
	\item The period of every task in $\tau'$ is an integer less than or equal to its corresponding period in $\tau$, i.e., $T'_{i} \leq T_{i}$.
	\item Every task continues to remain schedulable, i.e., $C_{i} \leq T'_{i}$.
\end{enumerate}

A resulting task set is said to be \emph{optimal} if it optimizes the selected rational metric as defined in Section \ref{MathRep}.\\

\subsection{Mathematical Representation of Harmonic Sets}\label{MathRep}
We now derive a mathematical representation for our problem definition.

First, consider an integer harmonic set $\zeta = \{\zeta_{1}, \zeta_{1}, ..., \zeta_{n}\}$, with $n$ numbers.  Since every $\zeta_{i}$ is an integer multiple of a smaller number in the set, every $\zeta_{i}$ can be represented as
\begin{equation}
\begin{aligned}
& \zeta_{i} = m * b^{x_{i}}\\ 
\end{aligned} 
\end{equation}
where, we refer to $m \in \mathbb{I}$ as the \emph{multiplier}, $b \in \mathbb{N}_{>0}$ as the \emph{base} and $x_{i} \in \mathbb{N}_{\ge 0}$ as the exponent. 
That is, every period $T'_{i}$ in the harmonic period set can be represented as
\begin{equation}
T'_{i} = m * b^{x_{i}}
\end{equation}

Given that every $T'_{i}$ represents a task period, the multiplier is a positive integer, i.e.,  $m \in \mathbb{N}_{>0}$.

Example: Consider the integer harmonic set {10, 20, 40}. These numbers correspond to $m = 10, b=2, x = \{0,1,2\}$.
Since every number in the harmonic period set is less than or equal to the corresponding number in the original task set, we have,
\begin{equation}
\begin{aligned}
& T_{i} = T'_{i} + K_{i}
& \text{where } K_{i} \in \mathbb{R}_{=>0}\\
& T_{i} = m * b^{x_{i}} + K_{i}
\end{aligned} 
\end{equation}

A real-time systems engineer may be interested in reduced total system utilization, or having individual periods not to be too small compared to the original periods depending on the application. We use a general cost function to capture a variety of such metrics.
\begin{figure}[t] 
	\centering
	\includegraphics[page=6, width=8.5cm, height=4cm]{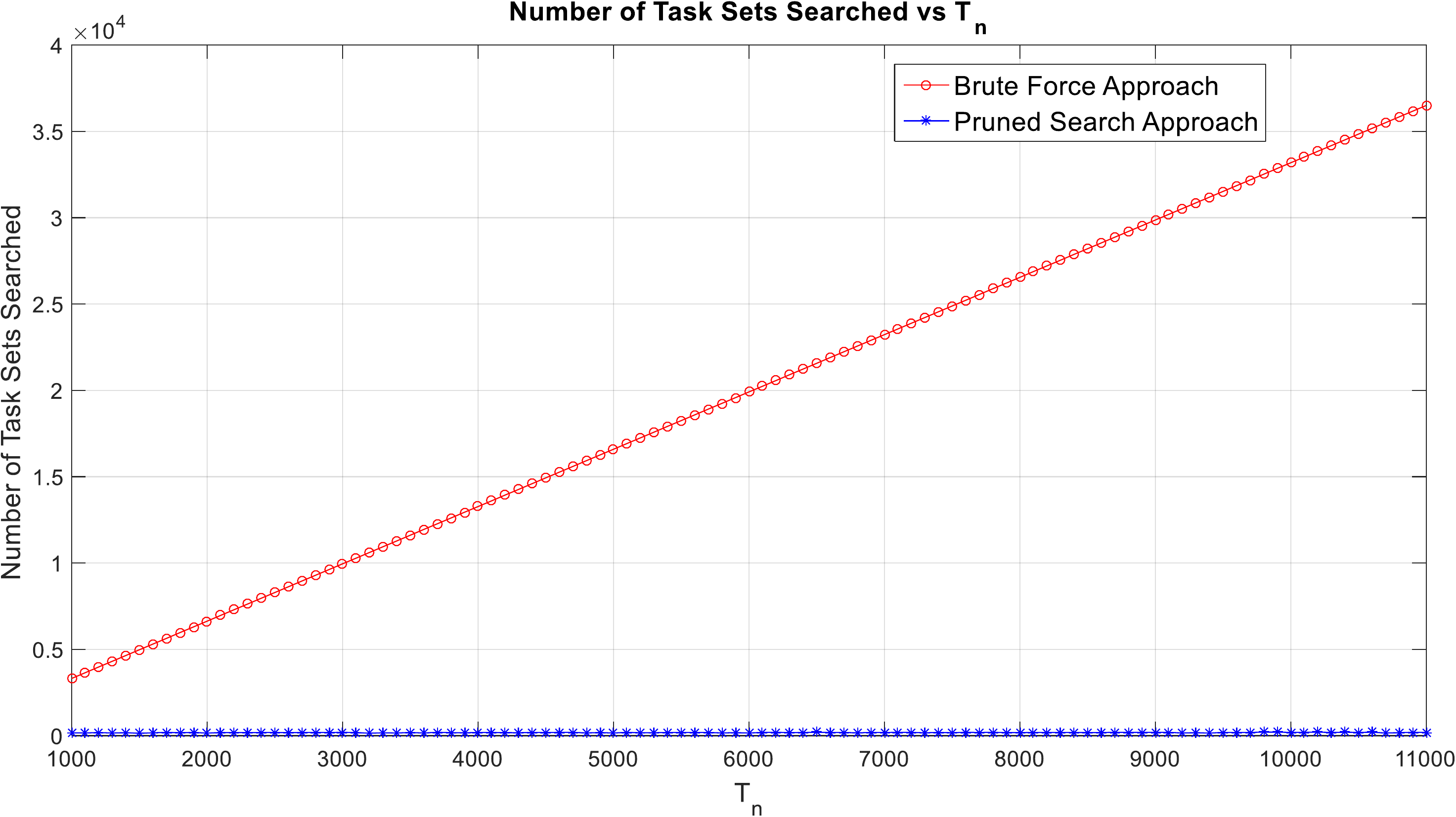}
	\caption{Rational Metric Example}
	\label{RationalExample}
\end{figure}
\subsection{Rational Metrics}\label{SecRationalMetrics}
\begin{defn}
A metric for harmonization is said to be rational if the value of its cost function increases as the deviation of any one period in the resultant harmonic task set from the original task set increases.
\end{defn}

Our  goal is to find $m^{*}$ and $b^{*}$, which represent the values of the multiplier and the base optimizing a given rational metric, as defined above. Most of the commonly used error metrics are rational. Below are a few examples. 

\begin{enumerate}
	
	\item Total percentage error (TPE), i.e., 
	\begin{equation*}
	\begin{aligned}
	& hcf(m,b,x) = \underset{T'}{\text{Minimize}}
	& &  \sum_{i=1}^{n} (T_{i} - T'_{i})/T_{i} \\
	& \Rightarrow hcf(m,b,x)= \underset{m, b}{\text{Minimize}}
	& &  \sum_{i=1}^{n} (T_{i} - (m * b^{x_{i}}))/T_{i} \\
	\end{aligned}
	\end{equation*}
	
	\item Total system utilization (TSU), i.e., 
	\begin{equation*}
	\begin{aligned}
	& hcf(m,b,x) = \underset{T'}{\text{Minimize}}
	& & \sum_{i=1}^{n} (C_{i}/T'_{i}) \\
	& \Rightarrow hcf(m,b,x)= \underset{m, b}{\text{Minimize}}
	& & \sum_{i=1}^{n} (C_{i}/(m * b^{x_{i}})) \\
	\end{aligned} 
	\end{equation*}
	
	\item First order error (FOE) i.e., 
	\begin{equation*}
	\begin{aligned}
	& hcf(m,b,x) = \underset{T'}{\text{Minimize}}
	& & \sum_{i=1}^{n} (T_{i} - T'_{i}) \\
	& \Rightarrow hcf(m,b,x)= \underset{m, b}{\text{Minimize}}
	& & \sum_{i=1}^{n} (T_{i} - (m * b^{x_{i}})) \\
	\end{aligned}
	\end{equation*}
	
	\item Maximum percentage error (MPE),
	\begin{equation*}
	\begin{aligned}
	& hcf(m,b,x) = \underset{T'}{\text{Minimize}}
	& & \underset{i}\max [(T_{i} - T'_{i})/T_{i}]. \\
	& \Rightarrow hcf(m,b,x) = \underset{m, b}{\text{Minimize}}
	& & \underset{i}\max [(T_{i} - (m * b^{x_{i}}))/T_{i}]. \\	\end{aligned} 
	\end{equation*}
	
\end{enumerate}

Figure \ref{RationalExample} illustrates rationality of the TPE metric. The rationality of the other metrics shown below can be inferred in the same way. 

The choice of the cost function is application-dependent. For example, a resource-constrained system would optimize for reduced system utilization whereas, an application with relatively stricter period requirements would prefer to minimize the maximum percentage error from the original task set. This list of cost functions above is not exhaustive, additional details will be provided in Section \ref{Pruned}.

\section{Harmonic Brute-Force Search Algorithm}\label{BruteForce}
\subsection{Bounds on the multiplier, the base, and the exponent}\label{Bounds}
In this section, we present some properties of the period values we are working with in terms of the mathematical representation from section \ref{MathRep}.
\begin{lemma}
	\label{lemma1}
	Given a set of periods $T$, the range of legal or valid values that the multiplier $m$ of the output harmonic period set ranges from $1$ to $T_{1}$.
\end{lemma}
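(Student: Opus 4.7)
The plan is to prove the two bounds on $m$ separately, directly from the representation $T'_j = m \cdot b^{x_j}$ in Section \ref{MathRep} and the problem constraints in Section \ref{systeMBOdel}. The lower bound $m \geq 1$ follows immediately from the typing constraint $m \in \mathbb{N}_{>0}$, so this direction is essentially by definition and requires no further work.

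For the upper bound $m \leq T_1$, the key observation is that $b^{x_j} \geq 1$ for every $j$, since $b \geq 1$ (because $b \in \mathbb{N}_{>0}$) and $x_j \geq 0$ (because $x_j \in \mathbb{N}_{\geq 0}$). Hence $T'_j \geq m$ for every task in the output set. Specializing to $j = 1$ and chaining with the problem-statement constraint $T'_1 \leq T_1$, I would conclude $m \leq T'_1 \leq T_1$, which closes the bound.

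One subtlety worth flagging is that the triple $(m, b, x)$ representing a given harmonic set is not unique; for example, $\{10, 20, 40\}$ can be written with $(m, b, x) = (10, 2, \{0,1,2\})$ or $(5, 2, \{1,2,3\})$. This does not affect the argument, because the inequality $m \leq T'_j$ holds in \emph{every} valid representation, so the bound $m \leq T_1$ transfers uniformly across them. I do not expect any real obstacle here: the whole argument reduces to the two elementary facts $b^{x_j} \geq 1$ and $T'_1 \leq T_1$, plus the typing of $m$.
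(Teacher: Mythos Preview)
Your proposal is correct and follows essentially the same approach as the paper: both bound $m$ above by specializing to the first task and using $T'_1 = m \cdot b^{x_1} \le T_1$ together with $b^{x_1} \ge 1$, and both obtain the lower bound directly from $m \in \mathbb{N}_{>0}$. The paper phrases the upper bound as ``$m$ is maximized when $b^{x_1}=1$'' while you phrase it as ``$m \le T'_1$ always''; these are the same observation, and your remark on non-uniqueness of the representation is a harmless extra that the paper omits.
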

\begin{proof}
	Every element of the output harmonic period set is constrained to be less than or equal to the corresponding element in the input period set. Consider the first element $T'_{1}$, which can be represented as follows,
	\begin{equation*}
	\begin{aligned}
	& T'_{1} = m * b^{x_1} \leq T_{1}\\
	& m_{max} \leq T_{1} \text{ when } b = 1 \text{ or }  x_1 = 0
	\end{aligned} 
	\end{equation*}
	The value of the multiplier is maximized when the second term, $b^{x_1}$,  is $1$, i.e., either $b = 1$ or $x_1 = 0$. Hence, the maximum value of multiplier is $T_{1}$ and, since the multiplier is a positive integer, the minimum value for the multiplier is 1. 
\end{proof}
\begin{lemma}
	\label{lemma2}
	Given a set of periods $T$, the base $b$ of the output harmonic period set, for a given multiplier $m$, ranges from $1$ to $\lfloor T_{n}/m \rfloor$.
\end{lemma}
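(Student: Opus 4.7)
The plan is to mirror the strategy used in Lemma 1: establish the lower bound directly from the definition of $b$, then derive the upper bound by applying the tightest instance of the constraint $T'_i \le T_i$.

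The lower bound is immediate: since $b \in \mathbb{N}_{>0}$ by construction in Section \ref{MathRep}, we have $b \ge 1$.

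For the upper bound, I would focus on the largest task, using $T'_n = m \cdot b^{x_n} \le T_n$. The idea is analogous to the Lemma 1 argument: to find the maximum admissible value of one parameter, minimize the contribution of the companion factor $b^{x_n}$. Here, the value $x_n = 0$ makes the base irrelevant to $T'_n$, and under the natural non-decreasing assignment of exponents matching non-decreasing periods it forces every $x_i = 0$, yielding the trivial constant harmonic set $\{m,\ldots,m\}$ that is already realizable for any $b$ in the stated range. The smallest exponent that actually couples $b$ to the constraint at task $n$ is $x_n = 1$, which gives $m \cdot b \le T_n$, hence $b \le T_n/m$. Any larger exponent only strengthens this inequality, since $b \ge 1$ implies $b^{x_n} \ge b$. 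Because $b$ is a positive integer, this yields $b \le \lfloor T_n / m \rfloor$.

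The main subtlety I expect is justifying that $x_n = 1$ is the correct ``binding'' exponent for the maximum-base analysis, i.e., that values of $b$ strictly exceeding $\lfloor T_n/m \rfloor$ cannot produce any harmonic period set that is not already attainable within the stated range (namely the trivial constant set at $x_i = 0$). Beyond that, the argument is a direct calculation parallel to Lemma 1.
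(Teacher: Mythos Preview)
Your proposal is correct and follows essentially the same approach as the paper: both argue from the constraint $T'_n = m \cdot b^{x_n} \le T_n$, take $x_n = 1$ as the binding exponent to obtain $b \le T_n/m$, and invoke $b \in \mathbb{N}_{>0}$ for both the floor and the lower bound. If anything, your justification for why $x_n = 1$ is the relevant case is more explicit than the paper's terse derivation.
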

\begin{proof}
	We have, $\forall$ $i$, $T'_{i} \le T_{i}$. Consider the last element $T'_{n}$, which can be represented as follows,
	\begin{equation*}
	\begin{aligned}
	& T'_{n} = m * b^{x_n} \leq T_{n}\\
	& b^{x_n} \leq T_{n}/m \text{ since } m \in \mathbb{N}_{>0} \\
	& b_{max} \leq T_{n}/m \text{ when } x_{n} = 1 \\
	& \Rightarrow b_{max} = \lfloor T_{n}/m \rfloor \text{ since } b \in \mathbb{N}_{>0} 
	\end{aligned} 
	\end{equation*}
	For a given multiplier $m$, the value of the base is maximized when $x_n = 1$. Hence, the maximum value of multiplier is $\lfloor T_{n}/m \rfloor$ and since the base is a positive integer, the minimum value for the base is 1.   
\end{proof}

\begin{lemma}
	\label{lemma3}
	Given an input set of periods $T$, the exponent $x_{i}$ for each term of the output harmonic period set ranges from 0 to $\lfloor log_{2}(T_{i}/m) \rfloor$. For $b = 1$ the exponent is irrelevant (i.e., any value of $x_i \geq 0$ has the same effect).
\end{lemma}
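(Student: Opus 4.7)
The plan is to start from the period constraint $T'_i = m \cdot b^{x_i} \le T_i$ established in Section \ref{MathRep} and isolate the exponent. Dividing through by the (positive integer) multiplier gives $b^{x_i} \le T_i/m$, and from here the argument splits on the value of the base $b$, whose admissible range was pinned down in Lemma \ref{lemma2}.

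For the case $b \ge 2$, I would take logarithms to obtain $x_i \le \log_b(T_i/m)$. Since $\log_b$ is a decreasing function of the base on inputs exceeding $1$, and since $b \ge 2$, the inequality $\log_b(T_i/m) \le \log_2(T_i/m)$ holds, so $x_i \le \log_2(T_i/m)$. Because $x_i \in \mathbb{N}_{\ge 0}$, this real-valued bound can be tightened to the integer bound $x_i \le \lfloor \log_2(T_i/m) \rfloor$. The lower bound $x_i \ge 0$ is immediate from the domain $\mathbb{N}_{\ge 0}$ declared for the exponent in Section \ref{MathRep}.

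For the degenerate case $b = 1$, the expression $b^{x_i}$ evaluates to $1$ for every choice of $x_i \ge 0$, so $T'_i = m \cdot b^{x_i} = m$ regardless of $x_i$. Consequently the exponent has no effect on the assigned period, which justifies the ``exponent is irrelevant'' clause of the statement. It is worth noting that this is also precisely the case in which the $\log_b$ step above breaks down, which is why the lemma isolates it explicitly.

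The main subtlety, rather than an obstacle, is recognizing that the stated upper bound is a \emph{uniform} one across all legal bases $b \ge 2$ rather than the tightest bound for a specific base; the tightest bound would be $\lfloor \log_b(T_i/m) \rfloor$, but bounding $b$ below by $2$ yields the single clean formula in the statement, which is what is needed later to enumerate exponents without first fixing $b$.
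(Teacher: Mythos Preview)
Your proof is correct and follows essentially the same route as the paper: start from $m\cdot b^{x_i}\le T_i$, take $\log_b$ for $b\ge 2$, bound by $\log_2$ via monotonicity in the base, apply integrality for the floor, and treat $b=1$ separately because $1^{x_i}\equiv 1$. Your write-up is in fact a bit more careful than the paper's in articulating why base $2$ gives the uniform upper bound and why the $b=1$ case must be split off.
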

\begin{proof}
	We have, $\forall$ $i$, $T'_{i} \le T_{i}$. That is,\
	\begin{equation*}
	\begin{aligned}
	& T'_{i} = m * b^{x_i} <= T_{i}\\
	& {x_i} \leq log_{b}(T_{n}/m)\\ 
	& \text{ since } b \in \mathbb{N}_{=>1} \text{ and } m \in \mathbb{N}_{>0}, \text {we have} \\
	& (x_i)_{max} \leq log_{2}(T_{n}/m) \text{ when } b = 2 \\
	& (x_i)_{max} = \lfloor log_{2}(T_{n}/m) \rfloor \text{ since } x_i \in \mathbb{N}_{=>0} 
	\end{aligned} 
	\end{equation*}
	As the base decreases, the value of the exponent increases. For a given multiplier $m$, the value of the exponent is maximized when the value of the base is minimum i.e., in this case, $b=2$ since $b \in \mathbb{N}_{>1}$. Hence, the maximum value of the exponent is $\lfloor log_{2}(T_{n}/m) \rfloor$ for $b \in \mathbb{N}_{>1}$ and, since the exponent is a non-negative integer, the minimum value for the exponent is 0. When $b=1$, the second term $b^{x_i}$ will always result in a value of 1, i.e. every $T'_{i} = m$, so any value of $x_i >=0$ has the same effect.
\end{proof}

\begin{figure}[t] 
	\centering
	\includegraphics[trim =30mm 0mm 0mm 0mm, clip, width=10.5cm, height=4.5cm]{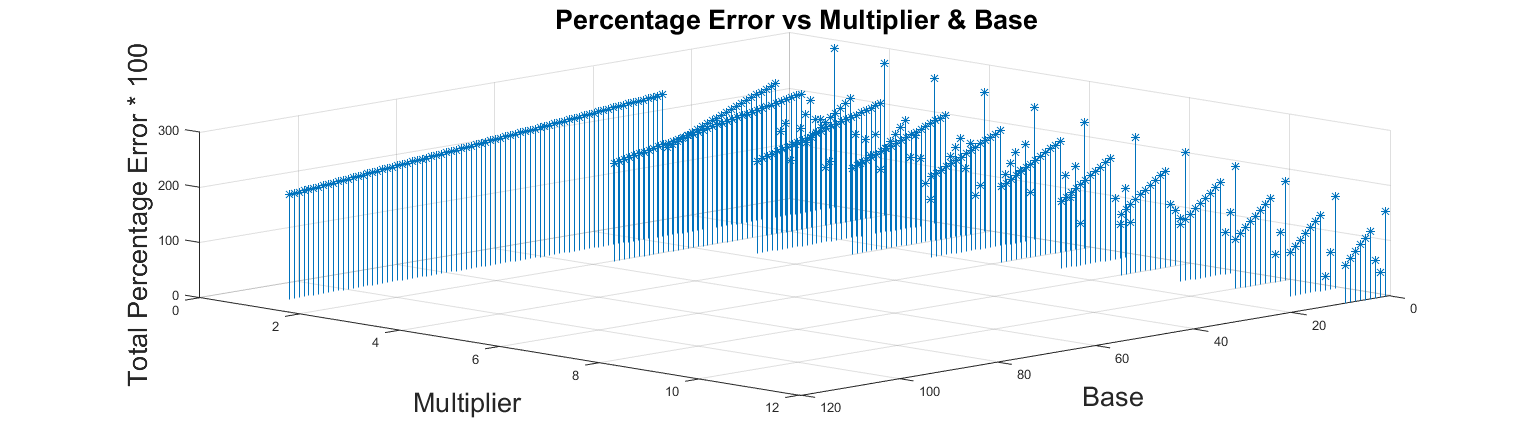}
	\caption{Brute-Force Search Plot}
	\label{BruteForceSearchPlot}
\end{figure}

\begin{figure}[t] 
	\centering
	\includegraphics[width=7cm, height=6.5cm]{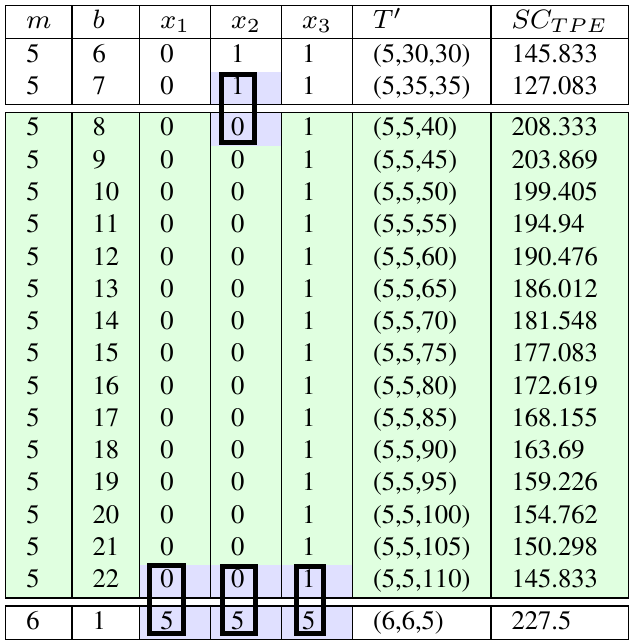}
	\caption{Brute-Force Search tables (Best Viewed in Color)}
	\label{BruteForceSearchTables}
\end{figure}

\begin{algorithm} [t]
	
	\caption{$FindClosestHarmonicSeries$}
	\begin{algorithmic}[1]
		\Procedure{findClosestHarmonicSeries}{}
		\State $m \gets \text{multiplier}$
		\State $b \gets \text{base}$
		\State $\tau \gets \text{input task set}$
		\State $T \gets \text{period set from } \tau$
		\For{each $\tau_{i} \text{ in } \tau }$
		\State $x_i = \lfloor log_{b}(T_{i}/m) \rfloor$ 
		\State $T'_{i} = m * b^{x_{i}}$
		\EndFor
		\State \textbf{return} $T'$
		\EndProcedure
	\end{algorithmic}
\end{algorithm}

\begin{algorithm} [t]
	
	\caption{Harmonic Brute-Force Search Algorithm}
	\begin{algorithmic}[1]
		\Procedure{bruteForceSearch}{}
		\State $\tau \gets \text{input task set}$
		\State $T \gets \text{period set from } \tau$
		\State $hcf \gets \text{cost function to minimize}$
		\State $m_{max} = T_{1}$
		\State $error = numMax$
		\For{each multiplier $m \text{ from  } 1 \text{ to } m_{max}$ }
		\State $b_{max} = \lfloor T_{n}/m \rfloor$
			\For{each base $b \text{ from  } 1 \text{ to } b_{max}$ }
				\State $T'_{temp} = findClosestHarmonicSeries(\tau, m , b) $
				\State $errorTemp = findError(hcf, \tau, T'_{temp})$
				\If {$errorTemp < error$  \& $C_{i} <= T'_{temp}$}
					\State $T' = T'_{temp}$
				\EndIf
			\EndFor
		\EndFor
		\State \textbf{return} $T'$ \Comment{Return the closest feasible harmonic period set or null to indicate infeasible result}
		\EndProcedure
	\end{algorithmic}
\label{Harmonic Brute-Force Search Algorithm}
\end{algorithm}

From Lemma \ref{lemma1} and Lemma \ref{lemma2}, we have the bounds on the multiplier and the base of the output harmonic period set $T'$ for an input period set $T$. We need to iterate over all combinations of the multiplier and base, and find the optimal values $m^{*}$ and $b^{*}$ for chosen cost function from Section \ref{MathRep}.\\ 

In each iteration, the values of $m$ and $b$ are known, the only unknown is $x$ which is efficiently calculated as the highest exponent to which the base $b$ can be raised before $T'_{i}$ becomes greater than $T_{i}$. This is detailed in the \emph{findClosestHarmonicSeries} function in Algorithm 1.\\ 

As an example, consider a simple task set $\tau = \{(2,12), (3,35), (2,112)\}$ as an input to the Harmonic Brute-Force Search Algorithm. Suppose the cost function of interest is minimizing overall percentage error, i.e., the first cost function detailed in Section \ref{MathRep}. Figure \ref{BruteForceSearchPlot} plots the scaled total percentage error for every combination of the multiplier and base as calculated by the Harmonic Brute-Force Search Algorithm. From Lemma \ref{lemma1}, the multiplier values range from 1 to 12 since $T_{1} = 12$. For each multiplier, the bases have a different range as given by Lemma \ref{lemma2} and represented in the plot. For a given multiplier and a range of bases, the plot shows a constant decreasing trend in the total percentage error. Also, the trend repeats for different multipliers and base ranges. Figure \ref{BruteForceSearchTables} highlights the values corresponding to one such trend, i.e., for $m = 5$ and $b$ from 8 to 22. The figure shows the output harmonic period set and the exponent values corresponding to each term in this set, for a given multiplier and base combination. It also shows the corresponding scaled cumulative percent error  ($SC_{TPE}$) of the output harmonic period set with respect to the original task set $T$. Notice that, for this range the values of $x_{i}$ remain constant across all entries, i.e $x = \{0,0,1\}$. As can be seen only the last entry in this range is of interest in the search for an optimal assignment. This is the key insight on which the Discrete Piecewise Search Algorithm described in the next section is based on. Algorithm 2  details the brute-force search approach. Notice that the brute-force algorithm also checks the feasibility of the solution, i.e., $C_{i} \leq T'_{i}$, and it returns a solution if one exists, else returns null.

\begin{figure}[t] 
	\centering
	\includegraphics[page=5, width=8.5cm, height=4.5cm]{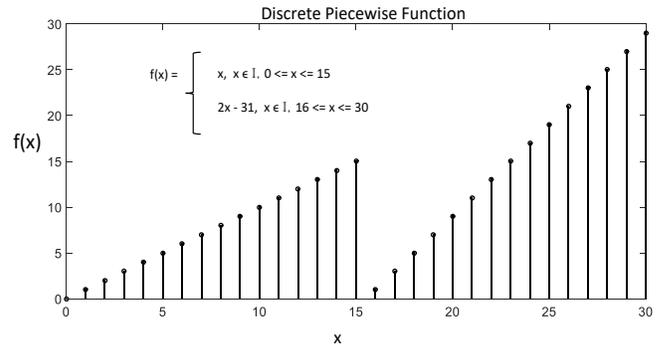}
	\caption{Discrete Piecewise Function Example}
	\label{DPFExample}
\end{figure}

\section{Discrete Piecewise Search Algorithm}\label{Pruned}
\subsection{Discrete Piecewise Optimization}\label{SecDPO}
In this section, we present the discrete piecewise optimization approach.
\begin{defn}
	A function is said to be a discrete piecewise function if it can be represented as follows,
	
	\[ 
	f(x) = 
	\begin{cases} 
	g_1(x), & x \in \text{Set } S_1 \\
	g_2(x), & x \in \text{Set } S_2 \\
	.\\
	.\\
	\end{cases}
	\]
\end{defn}
Figure \ref{DPFExample} illustrates an example of a discrete piecewise function. 

The optimal value of a given discrete piecewise function $f(x)$, can be obtained by finding the optimal value from the set of optimal values for each piecewise function, i.e,

\begin{equation*}
f(x)_{opt} = Opt\{g_{1}(x)_{opt} \text{, } g_{2}(x)_{opt}, ...\}
\end{equation*}

\subsection{Properties and Description}\label{SecPropertiesAndDes}

In this section, we first express the harmonization problem as a discrete piecewise optimization problem and highlight its properties. We then describe the discrete piecewise harmonic search algorithm in detail.

In section \ref{BruteForce}, we showed that the multiplier and the base take fixed range of values. The TPE cost function over these ranges can be expressed as follows,
\begin{equation*}
\begin{aligned}
&hcf(m,b,x)= \underset{T'}{\text{Minimize}} \sum_{i=1}^{n} (T_{i} - m * b^{x_i})/T_{i} \forall m,b\\
\end{aligned}
\end{equation*} 

We decompose the above equation over ranges with a fixed multiplier and a range of bases such that that all values of $x_i$ remain identical.
\[ 
hcf(b) = 
\begin{cases} 
g_1(b), & \underset{T'}{\text{Minimize}} \sum_{i=1}^{n} (T_{i} - m_1 * b^{x_i})/T_{i} \\
&\forall 0 < b \le b_{j_1}\\
g_2(b), & \underset{T'}{\text{Minimize}} \sum_{i=1}^{n} (T_{i} - m_1 * b^{x_i})/T_{i} \\
&\forall b_{j_1} < b \le b_{j_2}\\
.\\
.\\
g_{k-1}(b), & \underset{T'}{\text{Minimize}} \sum_{i=1}^{n} (T_{i} - m_1 * b^{x_i})/T_{i} \\
&\forall b_{j_l} < b \le \lfloor T_{n}/m_1 \rfloor \\
g_k(b), & \underset{T'}{\text{Minimize}} \sum_{i=1}^{n} (T_{i} - m_2 * b^{x_i})/T_{i} \\
&\forall 0 < b \le b_{m_1}\\
.\\
.\\
\end{cases}
\]

As can be seen above the harmonization cost function takes the form of a discrete piecewise function. We now find the \emph{local} optimal values of each piecewise function.
  
\begin{lemma}
	\label{lemma4}
	Given a set of harmonic period sets with a fixed multiplier and range of bases, such that all values of $x_i$ remain identical, the harmonic period set with the largest base will be the local (i.e. with the range of bases specified) optima w.r.t the cost functions of rational metrics.
\end{lemma}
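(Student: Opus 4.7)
\bigskip

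\noindent\textbf{Proof plan for Lemma \ref{lemma4}.} The plan is to show that, inside the fixed range of bases in which all exponents $x_i$ are frozen, increasing the base can only move each resulting period $T'_i$ \emph{closer} to its original period $T_i$, and then invoke the definition of a rational metric to conclude that the cost is non-increasing in $b$. The largest $b$ in the range therefore minimizes the cost.

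First I would fix a multiplier $m$ and a contiguous set of bases $\{b_1,b_1+1,\ldots,b_\ell\}$ on which Algorithm~1's assignment $x_i=\lfloor \log_b(T_i/m)\rfloor$ takes the same value for every task (this is precisely the hypothesis of the lemma). On this set we have the closed form $T'_i(b) = m\cdot b^{x_i}$ with the $x_i$'s treated as constants. The first key step is a monotonicity observation: for every $i$, the function $b \mapsto m\cdot b^{x_i}$ is non-decreasing in $b$ (strictly increasing when $x_i\ge 1$, constant when $x_i=0$), since $m\ge 1$ and $b\ge 1$. Combined with the feasibility constraint $T'_i(b) \le T_i$ guaranteed by Algorithm~1 inside this range, this gives that the per-task deviation $T_i - T'_i(b) \ge 0$ is non-increasing in $b$.

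The second step is to convert this pointwise deviation-shrinking statement into a cost-shrinking statement. By Definition~1 (rationality), the cost function increases whenever the deviation of any single period increases; equivalently, if two candidate harmonic assignments $T'$ and $T''$ satisfy $|T_i - T'_i| \le |T_i - T''_i|$ for every $i$, then $hcf(T') \le hcf(T'')$. Taking $b' > b''$ in our range and applying the monotonicity of the previous step componentwise yields $hcf$ at base $b'$ no larger than $hcf$ at base $b''$. Hence $hcf$ is non-increasing in $b$ on the range, and the maximal base $b_\ell$ attains the local optimum.

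The step I expect to be the main (and really the only) obstacle is making the rationality hypothesis do exactly the work we need: the definition as stated talks about the cost increasing when a single deviation increases, while our monotonicity argument shrinks \emph{all} deviations simultaneously. I would bridge this by applying the one-coordinate version of rationality iteratively, changing one $T'_i$ at a time from its value at $b''$ to its value at $b'$; each single-coordinate change is a (weak) deviation decrease and thus a (weak) cost decrease, and chaining these $n$ steps gives the desired global inequality. I would finally remark that the four example metrics (TPE, TSU, FOE, MPE) are all coordinatewise monotone in the deviations $T_i - T'_i$, so the lemma applies uniformly to each one listed in Section~\ref{SecRationalMetrics}.
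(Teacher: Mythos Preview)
Your proposal is correct and follows essentially the same line as the paper: fix $m$ and the exponent vector, observe that each $T'_i=m\cdot b^{x_i}$ moves monotonically toward $T_i$ as $b$ grows, and conclude via rationality that the cost is minimized at the largest base in the range. The paper's version is terser---it simply asserts that under these hypotheses the cost function ``reduces to $\underset{b}{\text{Maximize}}\ b$'' and then verifies this reduction explicitly for TPE---whereas you spell out the coordinatewise monotonicity and chain the single-coordinate rationality hypothesis through the $n$ tasks; your write-up is the more careful of the two, and the only caveat (that the intermediate assignments in your chaining need not themselves be harmonic) is harmless since all four listed cost functions are defined on arbitrary period vectors.
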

\begin{proof}
	Since the multiplier and exponent for all the periods in each set remain constant, the value of each period depends solely on the value of the base. Hence, greater the value of the base, greater the value of the second term, i.e., $b^{x_i}$, and closer the period is to the corresponding period in the original task set resulting in a lower error value. Hence, under the above conditions, any cost function that can be reduced to the form, 
	
	\begin{equation*}
	\begin{aligned}
	& hcf(m,b,x)=\underset{b}{\text{Maximize}}
	& b \\
	\end{aligned}
	\end{equation*}
	
	will indicate that the metric is rational and have a local optimum value at $b = b_{max}$ in the given base range. This can also be thought of as every base $b$, before which the exponent $f_i$ of any term changes, is a locally optimal solution. Consider the cost function used in Figure \ref{BruteForceSearchPlot}.
	\begin{equation*}
	\begin{aligned}
	&hcf(m,b,x)= \underset{T'}{\text{Minimize}} \sum_{i=1}^{n} (T_{i} - m * b^{x_i})/T_{i} \\
	& hcf(b)= \underset{b}{\text{Minimize}} \sum_{i=1}^{n} (T_{i} - m * b^{x_i})/T_{i} \\
	& \text{ Since } m \text{ and } x_i \text{ are constants} \\
	& hcf(b)=\underset{b}{\text{Maximize}} \sum_{i=1}^{n} (m * b^{x_i}) \\
	& hcf(b)=\underset{b}{\text{Maximize}} \text{ } b \\
	\end{aligned}
	\end{equation*} 
\end{proof}
Thus, the total percentage error cost functions' local optimal value will be at $b = b_{max}$ in the given base range. The property of Lemma \ref{lemma4} can be clearly seen in Figure \ref{BruteForceSearchTables}, where the multiplier is fixed at 5, the base $b$ increases from 8 to 22, and $x = \{0,0,1\}$. $T'_{3}$ increases from 40 to 110 which is the highest value it can take for $x_3 = 1$. Hence, under these conditions, in the above example, $b = 22$ will always be the local optimal solution for the $m=5$ and $b=8 \text{ to } 22$. Corresponding results apply to other cost functions of rational metrics too.

\begin{lemma}
	\label{lemma5}
	Given a harmonic period set represented as $m * b^{x_i}$, for given multiplier $m$, the bases where the local optimal solutions occur are  $\lfloor \sqrt[p]{T_{i}/m} \rfloor$ where the range of $p$ is from  $1$ to $(x_i)_{max} $ and $b=1$.
\end{lemma}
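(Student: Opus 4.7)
The plan is to invoke Lemma \ref{lemma4}, which establishes that within any maximal run of consecutive bases over which the exponent vector $(x_1,\ldots,x_n)$ does not change, the local optimum of a rational metric is attained at the largest base in that run. Identifying all local optima thus reduces to finding, for each task, the bases at which its exponent $x_i$ drops as $b$ is incremented, and then selecting the last base immediately before each such transition.

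First I would fix the multiplier $m$ and study how $x_i$ depends on $b$. As computed by \emph{findClosestHarmonicSeries}, $x_i = \lfloor \log_b(T_i/m) \rfloor$ is the largest non-negative integer $p$ with $m \cdot b^p \le T_i$. This quantity is weakly decreasing in $b$, and the transition from $x_i = p$ to $x_i = p-1$ happens at the smallest $b$ for which $m \cdot b^p > T_i$. Consequently, the last base at which $x_i = p$ still holds is
\begin{equation*}
b \;=\; \lfloor (T_i/m)^{1/p} \rfloor \;=\; \lfloor \sqrt[p]{T_i/m} \rfloor,
\end{equation*}
which is exactly the candidate base asserted by the lemma.

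Next I would let $p$ range over $1,\ldots,(x_i)_{\max}$ (Lemma \ref{lemma3} justifies this upper bound) and $i$ range over all tasks, collecting every $\lfloor \sqrt[p]{T_i/m} \rfloor$. By the argument above these are precisely the bases at which at least one exponent is about to decrement; by Lemma \ref{lemma4} they are therefore exactly the base values at which local optima of the piecewise cost function can occur. The isolated case $b = 1$ must be appended separately: it is the minimum legal base (Lemma \ref{lemma2}), the exponent is irrelevant there (Lemma \ref{lemma3}), and hence it forms its own piece of the discrete piecewise function contributing an additional candidate.

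The main obstacle will be the bookkeeping needed to show that the union of these candidate bases covers every piece of the discrete piecewise function exactly once, with no piece skipped and no spurious base included — in particular, that two different tasks whose exponents transition at the same base produce coincident (not extra) candidates, and that bases lying strictly inside a piece are dominated by its endpoint via Lemma \ref{lemma4}. Bounding $p$ by $(x_i)_{\max}$, rather than by an arbitrary integer, is what keeps the candidate set finite and aligned with the lemma statement; this bound is inherited directly from Lemma \ref{lemma3}.
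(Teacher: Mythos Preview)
Your proposal is correct and follows essentially the same approach as the paper: invoke Lemma~\ref{lemma4} to reduce the problem to locating the maximal base before each exponent drop, observe that $x_i=\lfloor\log_b(T_i/m)\rfloor$ is weakly decreasing in $b$, and solve $m\cdot b^{p}\le T_i$ for the last base supporting exponent $p$ to obtain $\lfloor\sqrt[p]{T_i/m}\rfloor$, with $b=1$ handled as the degenerate $x_i=0$ case. Your treatment is in fact more careful than the paper's, which does not explicitly address the bookkeeping about coincident transitions or coverage of all pieces that you flag in your final paragraph.
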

\begin{proof}
	From Lemma \ref{lemma2}, we know that for a fixed multiplier $m$, the base varies from 1 to $\lfloor T_{n}/m \rfloor $. From Lemma \ref{lemma3}, we know that, as the base increases, the exponent decreases from $\lfloor log_{2}(T_{i}/m) \rfloor$ to 0. We know, from Lemma \ref{lemma4}, that at every base before which the exponent changes is a local optimal solution. This power flip occurs when the base is just big enough that the exponent has to be deceased. This happen when a base reaches the maximum possible value a given $x_i$ can support.
	\begin{equation*}
	\begin{aligned}
	& T'_{i} = m * b^{x_i} <= T_{i}\\
	& b^{x_i} <= T_{i}/m, m \in \mathbb{N}_{>0}\\
	& b_{max} =
	\begin{cases}
	\lfloor \sqrt[x_i](T_{i}/m) \rfloor, & \text{ if } 0 < x_i <= \lfloor log_{2}T_{i}/m \rfloor \\
	1, & \text{ if } x_i = 0
	\end{cases}\\
	\end{aligned} 
	\end{equation*}
	
	For any value of $x_i$ from 1 to $\lfloor log_{2}T_{i}/m \rfloor$, $\lfloor \sqrt[x_i](T_{i}/m) \rfloor$ will be the set of local optimal bases and, for $x_i = 0$, it is equivalent to $b=1$.  
\end{proof}
\begin{algorithm} [t]
	
	\caption{$GetLocalMinima$}
	\begin{algorithmic}[1]
		\Procedure{getLocalMinima}{}
		\State $m \gets \text{multiplier}$
		\State $T \gets \text{period set from } \tau$
		\State $B \gets 1$
		\For{each $T_{i} \text{ in } T }$
			\For{each $x \text{ from } 1 \text{ to } \lfloor log_{2}(T_{i}/m) \rfloor }$
				\State $B \gets \lfloor \sqrt[x](T_{i}/m) \rfloor$ 
			\EndFor
		\EndFor
		\State \textbf{return} $unique(B)$
		\EndProcedure
	\end{algorithmic}
\end{algorithm}

\begin{algorithm} [t]	
	\caption{Discrete Piecewise Harmonic Search Algorithm}\label{euclid}
	\begin{algorithmic}[1]
		\Procedure{prunedHarmoniSearch}{}
		\State $\tau \gets \text{input task set}$
		\State $T \gets \text{period set from } \tau$
		\State $hcf \gets \text{cost function to minimize}$
		\State $m_{max} = T_{1}$
		\State $error = numMax$
		\For{each multiplier $m \text{ from  } 1 \text{ to } m_{max}$ }
		\State $B = getLocalMinima(m, T)$
		\For{each base $b \text{ in } B$ }
		\State $T'_{temp} = findClosestHarmonicSeries(\tau, m , b) $
		\State $errorTemp = findError(hcf, \tau, T'_{temp})$
		\If {($errorTemp < error) \& (C_{i} <= T'_{temp})$}
		\State $T' = T'_{temp}$
		\EndIf
		\EndFor
		\EndFor
		\State \textbf{return} $T'$ \Comment{Return the closest feasible harmonic period set or null to indicate infeasible result}
		\EndProcedure
	\end{algorithmic}
\end{algorithm}

Algorithm 4 describes the DPHS algorithm in detail. We consider the entire multiplier range from 1 to $T_1$ (from Lemma \ref{lemma1}), but prune the number of bases using Lemma \ref{lemma5}. The $\mathit{GetLocalMinima}$ function takes an input period set and a multiplier value to calculate the local optima. Then, the algorithm only searches the local optima to optimize for the given cost function.

Consider the example in Section \ref{BruteForce}, where $m = 5$ and $x = \{0,0,1\}$ results in a range of bases from 5 to 22. DPHS will search only $b = \lfloor \sqrt[p]{T_{i}/m} \rfloor$, i.e., $b = \lfloor \sqrt[1]{112/5} \rfloor = 22$. As can been seen from Figure \ref{BruteForceSearchTables}, $b=22$ is a local minimum.

It is important to note that priority ordering of the harmonized task set remains identical to the RM priorities of the original task set, if ties are broken in favor of smaller original period values. This is a function of the $FindClosestHarmonicSeries$ method which assigns the harmonic closest to the original period value. This limits all tasks with periods larger than a given harmonic to the value of the harmonic resulting in priorities identical to the RM priorities of the original task set. 

\subsection{Applying the DPHS algorithm}\label{Application}
In the previous section, we described the DPHS algorithm in detail. In this section, we show the benefits of the algorithm by considering real-world applications. We first consider an avionics task set used in Locke et al. \cite{Locke}. Table \ref{GAP} presents this task set.

\begin{table}
	\centering%
\begin{tabular}{ |l|l|l|l|l|}
	\hline
	\multicolumn{5}{ |c| }{Generic Avionic Platform Timing requirements} \\
	\hline
	System & Subsystem  & Task & T(ms) & C(ms)\\ \hline	
	\multirow{5}{*}{Display} 
	& Status Update & $\tau_{10}$ & 200 & 3 \\
	& Keyset & $\tau_{11}$ & 200 & 1 \\
	& Hook Update & $\tau_{7}$ & 80 & 2 \\
	& Graphics Display  & $\tau_{8}$ & 80 & 9\\
	& Stores Update & $\tau_{12}$ & 200 & 1 \\ 
	\hline
	\multirow{1}{*}{RWR} 
	& Contact Mgmt & $\tau_{1}$& 25 & 5 \\
	\hline	
	\multirow{2}{*}{Radar} 
	& Target Update & $\tau_{4}$& 50 & 5 \\
	& Tracking Filter & $\tau_{2}$& 25 & 2 \\
	\hline	
	\multirow{3}{*}{NAV} 
	& Nav Update & $\tau_{6}$& 59 & 8 \\
	& Steering Cmdc & $\tau_{13}$ & 200 & 3 \\
	& Nav Status  & $\tau_{16}$ & 1000 & 1 \\
	\hline	
	\multirow{1}{*}{Tracking} 
	& Target Update  & $\tau_{9}$& 100 & 5 \\
	\hline	
	\multirow{3}{*}{Weapon} 
	& Weapon Protocol  & $\tau_{14}$ & 200 & 1\\
	& Weapon Release  & $\tau_{15}$& 200 & 3 \\
	& Weapon Aim & $\tau_{5}$ & 50 & 3 \\
	\hline
	\multirow{1}{*}{BIT} 
	& Equ. Status Update & $\tau_{17}$ & 1000 & 1 \\
	\hline	
	\multirow{1}{*}{Data Bus} 
	& Poll Bus Devices & $\tau_{3}$ & 40 & 1 \\
	\hline
	\hline	
\end{tabular}
\vspace{0.2cm}
\caption{Generic Avionic Platform Timing requirements \cite{Locke}}
\label{GAP}
\end{table}

\begin{table}
	\centering%
	\begin{tabular}{ |l|l|l|l|l|l|}
		\hline
		\multicolumn{6}{ |c| }{DPHS Outputs} \\
		\hline
		Task & C  & T & $T'_{TSU}$ & $T'_{MPE}$ & $T'_{FOE}$\\ \hline
		$\tau_{1}$  & 5 & 25  &25&20&8 \\
		$\tau_{2}$  & 2 & 25  &25&20&8 \\
		$\tau_{3}$  & 1 & 40  &25&40&40 \\
		$\tau_{4}$  & 5 & 50  &50&40&40 \\
		$\tau_{5}$  & 3 & 50  &50&40&40 \\ 
		$\tau_{6}$  & 8 & 59  &50&40&40 \\
		$\tau_{7}$  & 2 & 80  &50&80&40 \\
		$\tau_{8}$  & 9 & 80  &50&80&40 \\
		$\tau_{9}$  & 5 & 100 &100&80&40 \\
		$\tau_{10}$ & 3 & 200 &200&160&200 \\ 
		$\tau_{11}$ & 1 & 200 &200&160&200 \\
		$\tau_{12}$ & 1 & 200 &200&160&200 \\
		$\tau_{13}$ & 3 & 200 &200&160&200 \\
		$\tau_{14}$ & 1 & 200 &200&160&200 \\
		$\tau_{15}$ & 3 & 200 &200&160&200 \\
		$\tau_{16}$ & 1 & 1000&800&640&1000 \\
		$\tau_{17}$ & 1 & 1000&800&640&1000 \\
		\hline
		&  & Metrics &\cellcolor{green!12}TSU = 97.25\%&\cellcolor{green!12}MPE = 36\%&\cellcolor{green!12}FOE = 213 \\
		\hline
		\hline
	\end{tabular}
	\vspace{0.2cm}
	\caption{DPHS Outputs}
	\label{DPHSOut}
\end{table}

\begin{table}
	\centering%
	\begin{tabular}{ |l|l|l|l|l|l|}
		\hline
		\multicolumn{6}{ |c| }{Comparing Total System Utilization of Output Harmonic Period Sets} \\
		\hline
		Task & C  & $U_{T}$ & $U_{T'_{TSU}}$ & $U_{T'_{MPE}}$ & $U_{T'_{FOE}}$\\ \hline
		$\tau_{1}$  & 5 & 0.2       &0.2		&0.25		&0.625\\
		$\tau_{2}$  & 2 & 0.08      &0.08		&0.1		&0.25\\
		$\tau_{3}$  & 1 & 0.025     &0.04		&0.025		&0.025\\
		$\tau_{4}$  & 5 & 0.1       &0.1		&0.125		&0.125\\
		$\tau_{5}$  & 3 & 0.06      &0.06		&0.075		&0.075\\
		$\tau_{6}$  & 8 & 0.13559   &0.16		&0.2		&0.2\\
		$\tau_{7}$  & 2 & 0.025     &0.04		&0.025		&0.05\\
		$\tau_{8}$  & 9 & 0.1125    &0.18		&0.1125		&0.225\\
		$\tau_{9}$  & 5 & 0.05      &0.05		&0.0625		&0.125\\
		$\tau_{10}$ & 3 & 0.015     &0.015		&0.01875	&0.015\\
		$\tau_{11}$ & 1 & 0.005     &0.005		&0.00625	&0.005\\
		$\tau_{12}$ & 1 & 0.005     &0.005		&0.00625	&0.005\\
		$\tau_{13}$ & 3 & 0.015     &0.015		&0.01875	&0.015\\
		$\tau_{14}$ & 1 & 0.005     &0.005		&0.00625	&0.005\\
		$\tau_{15}$ & 3 & 0.015     &0.015		&0.01875	&0.015\\
		$\tau_{16}$ & 1 & 0.001     &0.00125	&0.0015625	&0.001\\
		$\tau_{17}$ & 1 & 0.001     &0.00125	&0.0015625	&0.001\\
		\hline
		Total U &  & 0.8501 &\cellcolor{green!12}\textbf{0.9725}&\cellcolor{blue!12}1.05313&\cellcolor{blue!12}1.762 \\
		\hline
		\hline
	\end{tabular}
	\vspace{0.2cm}
	\caption{Comparing Total System Utilization of Output Harmonic Period Sets}
	\label{TSU}
\end{table}

\begin{table}
	\centering%
	\begin{tabular}{ |l|l|l|l|l|}
		\hline
		\multicolumn{5}{ |c| }{Comparing Maximum Percentage Error of Output Harmonic Period Sets} \\
		\hline
		Task & C  &$PE_{T'_{TSU}}$ & $PE_{T'_{MPE}}$ & $PE_{T'_{FOE}}$\\ \hline
		$\tau_{1}$  & 5 & 0		&0.2	&0.68\\
		$\tau_{2}$  & 2 & 0		&0.2	&0.68\\
		$\tau_{3}$  & 1 & 0.375	&0		&0\\
		$\tau_{4}$  & 5 & 0		&0.2	&0.2\\
		$\tau_{5}$  & 3 & 0		&0.2	&0.2\\
		$\tau_{6}$  & 8 & 0.153	&0.322	&0.322\\
		$\tau_{7}$  & 2 & 0.375	&0		&0.5\\
		$\tau_{8}$  & 9 & 0.375	&0		&0.5\\
		$\tau_{9}$  & 5 & 0		&0.2	&0.6\\
		$\tau_{10}$ & 3 & 0		&0.2	&0\\
		$\tau_{11}$ & 1 & 0		&0.2	&0\\
		$\tau_{12}$ & 1 & 0		&0.2	&0\\
		$\tau_{13}$ & 3 & 0		&0.2	&0\\
		$\tau_{14}$ & 1 & 0		&0.2	&0\\
		$\tau_{15}$ & 3 & 0		&0.2	&0\\
		$\tau_{16}$ & 1 & 0.2	&0.36	&0\\
		$\tau_{17}$ & 1 & 0.2	&0.36	&0\\
		\hline
		Max $PE$ &  &\cellcolor{blue!12}0.375 &\cellcolor{green!12}\textbf{0.36}&\cellcolor{blue!12}0.68\\
		\hline
		\hline
	\end{tabular}
	\vspace{0.2cm}
	\caption{Comparing Maximum Percentage Error of Output Harmonic Period Sets}
	\label{MPE}
\end{table}

\begin{table}
	\centering%
	\begin{tabular}{ |l|l|l|l|l|}
		\hline
		\multicolumn{5}{ |c| }{Comparing First Order Error of Output Harmonic Period Sets} \\
		\hline
		Task & C  &$FOE_{T'_{TSU}}$ & $FOE_{T'_{MPE}}$ & $FOE_{T'_{FOE}}$\\ \hline
		$\tau_{1}$  & 5 & 0		&5		&17\\
		$\tau_{2}$  & 2 & 0		&5		&17\\
		$\tau_{3}$  & 1 & 15	&0		&0\\
		$\tau_{4}$  & 5 & 0		&10		&10\\
		$\tau_{5}$  & 3 & 0		&10		&10\\
		$\tau_{6}$  & 8 & 9		&19		&19\\
		$\tau_{7}$  & 2 & 30	&0		&40\\
		$\tau_{8}$  & 9 & 30	&0		&40\\
		$\tau_{9}$  & 5 & 0		&20		&60\\
		$\tau_{10}$ & 3 & 0		&40		&0\\
		$\tau_{11}$ & 1 & 0		&40		&0\\
		$\tau_{12}$ & 1 & 0		&40		&0\\
		$\tau_{13}$ & 3 & 0		&40		&0\\
		$\tau_{14}$ & 1 & 0		&40		&0\\
		$\tau_{15}$ & 3 & 0		&40		&0\\
		$\tau_{16}$ & 1 & 200	&360	&0\\
		$\tau_{17}$ & 1 & 200	&360	&0\\
		\hline
		Total $FOE$ &  &\cellcolor{blue!12}484 &\cellcolor{blue!12}1029&\cellcolor{green!12}\textbf{213}\\
		\hline
		\hline
	\end{tabular}
	\vspace{0.2cm}
	\caption{Comparing First Order Error of Output Harmonic Period Sets}
	\label{FOE}
\end{table}

We have already defined the total system utilization (TSU), first order error (FOE), and maximum percentage error (MPE) cost functions in Section \ref{MathRep}. Table \ref{DPHSOut} displays the output harmonic period set from the DPHS algorithm, for each of the cost functions (TSU, FOE, and MPE), for the task set from Table \ref{GAP}. Notice that we obtain a different period set assignment for each cost function. We now look at some properties of the cost functions.

\subsection{The Total System Utilization (TSU) Cost Function}\label{SecTSU}

Let us first consider the TSU (Table \ref{TSU}) cost function. In order to highlight the optimality of the output harmonic period set assignment by the DPHS algorithm, we compare the total system utilization value of the optimal solution ($\sum U_{T'_{TSU}}$) to the total system utilization values of the other harmonic period sets ($\sum U_{T'_{MPE}}$ \& $\sum U_{T'_{FOE}}$). The goal of the TSU cost function is to reduce the total utilization of the resultant harmonic period set. Hence, the period assignment will be biased towards maintaining the periods of tasks with relatively high utilization. This can be evidenced by the fact that the DPHS algorithm leaves the period of the highest-utilization task $\tau_{1}$ unchanged.\\ 
As can be expected, the utilization of the harmonic task set is larger than the original task set.

\begin{table}[t]
	\centering%
	\begin{tabular}{ |l|l|l|}
		\hline
		\multicolumn{3}{ |c| }{DPHS: Running Time Comparison} \\
		\hline
		CF & BFTime(ms)  & PSTime(ms)  \\
		\hline
		TUCF  & 52.115 & 1.473 \\	
		MPEF  & 69.519 & 7.197 \\
		FOCF  & 33.019 & 1.618 \\
		\hline
		\hline
	\end{tabular}
	\vspace{0.2cm}
	\caption{DPHS: Running Time Comparison}
	\label{DPHSRunningTime}
\end{table}

\begin{table}[t]
	\centering%
	\begin{tabular}{ |l|l|l|}
		\hline
		\multicolumn{3}{ |c| }{DPHS: Number of Task Sets Searched Comparison} \\
		\hline
		CF &  BF Task Sets Searched & PF Task Sets Searched\\
		\hline
		$TSU$  & 3806 & 244 \\	
		$MPE$  & 3806 & 244 \\
		$FOE$  & 3806 & 244 \\
		\hline
		\hline
	\end{tabular}
	\vspace{0.2cm}
	\caption{DPHS: Number of Task Sets Searched Comparision}
	\label{DPHSTaskSetsSearched}
\end{table}

\subsection{The Max Percentage Error (MPE) Cost Function}\label{SecMPE}

We now consider the MPE cost function highlighted in Table \ref{MPE}). In order to highlight the optimality of the output harmonic period set assignment by the DPHS algorithm, we compare the maximum percentage error value of the optimal solution ($max(PE_{T'_{MPE}})$) to the maximum percentage error values of the other harmonic period sets ( $max(PE_{T'_{TSU}})$ \& $max(PE_{T'_{FOE}})$). The goal of the MPE cost function is to limit the deviation of the ratio of a particular period to its the original period, so it attempts to affect all periods uniformly.

\subsection{The FOE Cost Function}\label{SecFOE}
We now consider the FOE cost function (Table \ref{FOE}) . Specifically, we compare the total first order error value of the optimal solution ( $\sum FOE_{T'_{FOE}}$ ) to that of the other harmonic period sets ($\sum FOE_{T'_{MPE}}$ \& $\sum FOE_{T'_{TSU}}$). The goal of the FOE cost function is to reduce the distance between the two sets. This can be evidenced by the fact that, as seen in Table \ref{MPE}, the DPHS algorithm ensures the period values of $\tau_{16}$ and $\tau_{17}$ are not reduced as they have the largest period values and the maximum potential for causing first-order errors.

It must be noted that the DPHS algorithm remains the same while computing the local optima and picking the global optimum, irrespective of the rational metric chosen.

Also it is noteworthy that there can exist significantly different optimal solutions based on the choice of the cost function.

\begin{table}[t]
	\centering%
	\begin{tabular}{ |l|l|l|l|l|l|}
		\hline
		\multicolumn{6}{ |c| }{DPHS Outputs Hartstone task set} \\
		\hline
		Task & C  & T & $T'_{TUCF}$ & $T'_{MPEF}$ & $T'_{FOCF}$\\ \hline
		$\tau_{1}$  & 5 & 333.33  &232	&200		&320 \\
		$\tau_{2}$  & 2 & 200  	  &116	&100		&160 \\
		$\tau_{3}$  & 1 & 142.86  &116	&100		&80 \\
		$\tau_{4}$  & 5 & 58.82   &58	&50			&40 \\
		$\tau_{5}$  & 3 & 32.26   &29	&25			&20 \\ 
		\hline
		&  & Metrics &\cellcolor{green!12}U=0.2371&\cellcolor{green!12}39\%&\cellcolor{green!12}145 \\
		\hline
		\hline
	\end{tabular}
	\vspace{0.2cm}
	\caption{DPHS Output Hartstone}
	\label{DPHSHartstone}
\end{table}

We now consider the performance of the DPHS algorithm compared to the brute-force approach (Algorithm \ref{Harmonic Brute-Force Search Algorithm}) for the task set of Table \ref{GAP}. Table \ref{DPHSRunningTime} shows the amount of time each algorithm took to find the optimal solution for our rational metrics. As can be seen, the DPHS algorithm runs much faster than the brute-force approach. Table \ref{DPHSTaskSetsSearched} details the number of task sets each algorithm searched to find the optimal solution. Note that the number of task sets searched are identical for all rational metrics. This is because the total number of tasks searched depends only on the multiplier range (i.e $T_{1}$ Lemma \ref{lemma1})  and base range (i.e $T_{n}$ Lemma \ref{lemma2}). Again DPHS, does far better than the brute-force approach, searching 94\% fewer task sets than the brute-force approach.

We now study the PN Hartstone task set \cite{Hartstone} shown in Table \ref{DPHSHartstone}. Again, the DPHS algorithm produces different optimal assignments as shown in Table \ref{DPHSHartstone}) for different cost functions. The results in Table \ref{DPHSHartstone} reinforce the earlier discussion. 

\begin{figure}[t] 
	\centering
	\includegraphics[page=1, width=8.5cm, height=5.5cm]{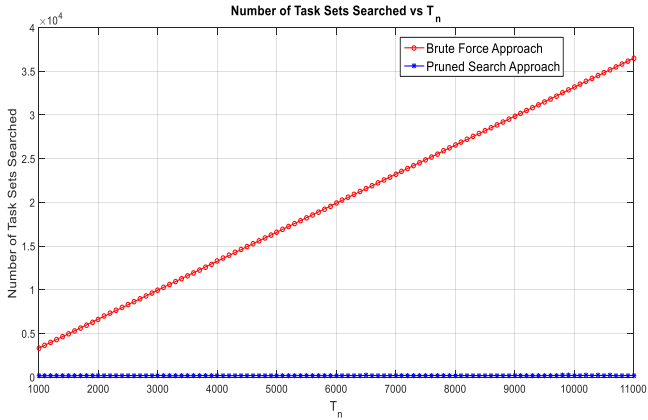}
	\caption{Brute-Force vs DPHS: fixed $T_{1}$ and fixed task set size, vary $T_{n}$}
	\label{Varybase}
\end{figure}

\begin{figure}[t] 
	\centering
	\includegraphics[page=2, width=8.5cm, height=5.5cm]{images/Experiments.pdf}
	\caption{Brute-Force vs DPHS: fixed $T_{n}$ and fixed task set size, varying $T_{1}$}
	\label{Varymultiplier}
\end{figure}

\begin{figure}[t] 
	\centering
	\includegraphics[page=3, width=8.5cm, height=5.5cm]{images/Experiments.pdf}
	\caption{Brute-Force vs DPHS: fixed $T_{1}$ and fixed task set size}
	\label{VaryLength}
\end{figure}

\begin{figure}[t] 
	\centering
	\includegraphics[page=4, width=8.5cm, height=5.5cm]{images/Experiments.pdf}
	\caption{Brute-Force Search Plot}
	\label{RandomRuntime}
\end{figure}

\section{Evaluation and Results}\label{Evaluation}
In the previous sections, we described the DPHS algorithm in detail and showed that it produces the optimal result for any rational metric. We also harmonized some real-world task sets. In this section, we conduct more general evaluation of the performance of the DPHS algorithm against the brute-force search approach with randomly generated values. The first metric is the number of task sets searched.
The DPHS algorithm searches only a small subset of the bases compared to the brute-force approach. The total number of bases decide the number of task sets to be checked. From Lemma \ref{lemma2}, the values the base can take per multiplier range from 1 to $\lfloor T_{n}/m \rfloor$. Hence the number of task sets searched depends on both the value of the multiplier and the value of $T_{n}$. In our experiments, we plot the average number of task sets searched to find the optimal solution over 50,000 randomly generated task sets. In our first experiment, we fix the cardinality of the input task set and the maximum value of the multiplier which is $T_{1}$ (Lemma \ref{lemma1}) to 15, and vary $T_{n}$. We generate random period values that are uniformly distributed in the interval [15, $T_{n}$]. We plot the number of task sets searched. As Figure \ref{Varybase} shows, the number of task sets checked by the brute-force algorithm increase linearly with $T_{n}$. This is because the number of bases searched by the brute-force algorithm per multiplier is equal to $\lfloor T_{n}/m \rfloor$, and, since the multiplier range remains constant, the number of bases is proportional to the $T_{n}$. On the other hand, the DPHS algorithm at max searches only unique bases from the set produced from  $log_{2}(T_{i}/m) + 1$ bases for each term per multiplier, and since the cardinality is fixed, the number of bases searched remains fairly constant as seen in the figure \ref{Varybase}.

In our next experiment, we fix the cardinality of the input task set and the maximum value of the base which is $T_{n}$ (from Lemma \ref{lemma2}) to 5000 and vary $T_{1}$. We generate random period values that are uniformly distributed in the interval [$T_{1}$, 5000]. We again plot the number of task sets searched. As Figure \ref{Varymultiplier} shows, the number of task sets that need to be checked by the brute-force algorithm increases with $T_{1}$. This is because the number of bases searched by the brute-force algorithm per multiplier is equal to $\lfloor T_{n}/m \rfloor$. With the increase in the multiplier range, the number of bases searched per multiplier decreases, explaining the nature of the curve. That is, the increase is more dramatic for smaller $T_{1}$ values and tapers off when $T_{1}$ increases. The result from the DPHS algorithm follows the same trend, but the number of bases searched are much lower because of the pruning.

In our next experiment plotted in figure \ref{VaryLength}, we fix both the range of the multiplier and the base by selecting $T_{1} = 15$ and $T_{n} = 5000$ and vary the cardinality of the input task set. We again generate random period values that are uniformly distributed in the interval [15, 5000]. Since the brute-force approach now has to search the same multipliers and bases every task set, the number of task sets searched remain constant. In the case of the DPHS algorithm since every term in the task set contributes new bases the number of task sets searched increase with the cardinality.

In our final experiment, we measure the running time of each algorithm for task sets, with periods randomly generated periods from a uniform distribution, with increasing cardinality of the input task set. As shown in Figure {\ref{RandomRuntime}, the running time is dominated by values of $T_{1}$ and $T_{n}$ which in this case are random and hence the nature of the plot. But the figure shows that the pruned search does perform significantly better in terms of running time compared to the brute-force algorithm.

In summary, the significant performance improvements of the DPHS over the brute force technique persist across randomly generated tasksets.

\vspace{-0.2cm}
\section{Conclusion}\label{Conclusion}
In this paper, we address the problem of assigning harmonic periods to a task set such that every task gets assigned an integer period less than or equal to its application specified upper bound and the task utilization of every task is less than 1. We presented a mathematical framework to represent integer harmonic task sets in terms of a multiplier, base and the exponent. We derived bounds on the values of these parameters. We presented our solution, the DPHS algorithm, which obtains the provably optimal solution, with excellent run-time performance compared to the brute force approach, for rational metrics. We also showed the benefits of our approach by considering real-world task sets.

\section{Acknowledgements}
We thank Sandeep D'souza and Shunsuke Aoki who were part of the initial discussions.

\end{document}